\journal{}
\newcommand{\sm}{\setminus}
\newtheorem{theorem}  {Theorem}          
\newtheorem{lemma}    {Lemma}             
\newtheorem{corollary}  {Corollary} 
\newtheorem{remark}   {Remark}          
\newtheorem{obs}      {Observation}
\definecolor{red}{RGB}{255,0,0}
\definecolor{blue}{RGB}{0,0,255}
\definecolor{green}{RGB}{0,255,0}
\newenvironment{proof}{\noindent{\bf Proof:}}
{\begin{flushright}$\square$\end{flushright}}
\newcommand {\abs}[1]  {\left\vert#1\right\vert}
\newcommand {\set}[1]  {\left\{#1\right\}}
\newcommand {\vect}[1] {{\mathbf{#1}}}
\newcommand {\defined} {\stackrel{def} {=}}
\newcommand {\nph}     {\textsc{NP}\textrm{-hard}}
\newcommand {\fpt}     {\sf FPT}
\newcommand {\xp}      {\sf XP}
\newcommand {\ETH}     {\sf ETH}
\newcommand {\bigoh}   {{\mathcal O}}
\newcommand {\runningtitle}[1] {\vspace{0.5ex}\noindent{\textbf{\boldmath #1:}}}
\DeclareMathOperator{\cs}{cs}
\newcommand{\BB}{\mathcal B}
\newcommand{\VV}{\mathcal V}
\newcommand{\CC}{\mathcal C}
\newcommand{\II}{\mathcal I}
\newcommand{\LL}{\mathcal L}
\newcommand{\maxcut} {\textsc{MaxCut}}
\newcommand{\cutsize}[1] {\cs(#1)}
\newcommand{\cw} {{\sf cw}}
\newcommand{\bw} {{\sf bw}}
\newcommand{\BW} {{\sf BW}}
\begin{document}
\begin{frontmatter}

\title{On the Maximum Cardinality Cut Problem in Proper Interval Graphs and Related Graph Classes}
%% Group authors per affiliation:

\author[bu]{Arman Boyac{\i}\fnref{tubitak}}
\author[bu]{Tinaz Ekim\fnref{tubitak}}
\author[telhai,hit]{Mordechai Shalom \corref{corr}}

\address[bu]{Department of Industrial Engineering, Bo\u{g}azi\c{c}i University, Istanbul, Turkey}
\address[telhai]{TelHai College, Upper Galilee, 12210, Israel}
\address[hit]{Holon Institute of Technology, Israel}

%\fntext[tubitak]{The work of this author is supported in part by TUBITAK Career Grant No. 111M482}
\cortext[corr]{Corresponding author}

\begin{abstract}
Although it has been claimed in two different papers that the maximum cardinality cut problem is polynomial-time solvable for proper interval graphs, both of them turned out to be erroneous. 
In this paper, we give $\fpt$ algorithms for the maximum cardinality cut problem in classes of graphs containing proper interval graphs and mixed unit interval graphs when parameterized by some new parameters that we introduce. 
These new parameters are related to a generalization of the so-called bubble representations of proper interval graphs and mixed unit interval graphs and to clique-width decompositions. 
\end{abstract}

\begin{keyword}
Maximum Cut, Proper Interval Graph, Clique Decomposition, Clique-width, Bubble Model
\end{keyword}

\end{frontmatter}

%\linenumbers

\section{Introduction}\label{sec:intro}
A \emph{cut} of a graph $G=(V(G),E(G))$ is a partition of $V(G)$ into two subsets $S, \bar{S}$ where $\bar S=V(G)\setminus S$. 
The \emph{cut-set} of $(S,\bar{S})$ is the set of edges of $G$ having exactly one endpoint in $S$. 
\emph{The maximum cardinality cut problem ($\maxcut$)} is to find a cut with a maximum size cut-set, of a given graph.

$\maxcut$ remains $\nph$ when restricted to the following graph classes: chordal graphs, undirected path graphs, split graphs, tripartite graphs, co-bipartite graphs \cite{Bodlaender00onthe}, unit disk graphs \cite{DK2007} and total graphs \cite{Guruswami1999217}. 
On the positive side, it was shown that $\maxcut$ can be solved in polynomial-time in planar graphs \cite{hadlock1975finding}, in line graphs \cite{Guruswami1999217}, in graphs with bounded clique-width \cite{fomin2010algorithmic}, and the class of graphs factorable to bounded treewidth graphs \cite{Bodlaender00onthe}.
None of these results applies to proper interval graphs.

Polynomial-time algorithms for some subclasses of proper interval graphs (also known as indifference graphs) are proposed in \cite{Bodlaender2004} and in \cite{BES15-MaxCut-JOCOSpecialIssue}, for split indifference graphs and co-bipartite chain graphs, respectively. 
A polynomial-time algorithm for proper interval graphs is proposed in \cite{bodlaenderKN99MaxCutUnitInterval}. 
However, as pointed out in \cite{Bodlaender2004} this algorithm contains a flaw and may return sub-optimal solutions. 
A polynomial-time algorithm for proper interval graphs was also proposed by the authors of this work \cite{BES16-MaxCutProperInterval}. 
However, this algorithm too, is flawed as explained in detail in \cite{Kratochvil-UBubbleModel-2020}.
Consequently, the question of whether $\maxcut$ can be solved in polynomial time for proper interval graphs is open. 

In the same work (\cite{Kratochvil-UBubbleModel-2020}) the well known two dimensional bubble model for proper interval graph is extended to mixed unit interval graphs and a sub-exponential exact algorithm is given for this family of graphs.

As for the parameterized complexity of the problem in general graphs, $\maxcut$ when parameterized by the clique-width of the input graph is not in $\fpt$ unless the Exponential Time Hypothesis ($\ETH$) collapses \cite{fomin2010algorithmic}.

\runningtitle{Our Contribution}
In this work, we consider the parameterized complexity of the $\maxcut$ in depending on graph parameters related to clique-width decomposition.
We present $\fpt$ algorithms for the problem when parameterized with these parameters.
Our results imply $\fpt$ algorithms for mixed unit interval graphs.

In Section \ref{sec:BubblePartition} we introduce the notion of bubble partitions and the parameters independence number $\alpha$ and width of a bubble partition. 
This notion generalizes the two dimensional bubble model of proper interval and mixed unit interval graphs.
We show in Theorem \ref{thm:FPTBubbles} that for every fixed $\alpha$ there is an $\fpt$ algorithm that computes a maximum cardinality cut of $G$ 
given a bubble partition with independence number $\alpha$ of it.
Since for mixed unit interval graphs, a bubble partition with independence number 1 can be found in polynomial time, 
this implies an $\fpt$ algorithm on the same parameter for this family of graphs.

In Section \ref{sec:CliqueWidth}, we extend the scope of this $\fpt$ algorithm to a wider domain. 
To this purpose the $(\alpha, \beta, \delta)$-clique-width decomposition of a graph is a clique-width decomposition having characteristics bounded by the parameters $\alpha, \beta, \delta$.
In Theorem \ref{thm:fpt_cliquewidth}, we show that the $\xp$ algorithm for $\maxcut$ when parameterized by clique-width presented in \cite{fomin2010algorithmic} 
runs in  $\fpt$ time when parameterized by the smallest width of an $(\alpha,\beta,\delta)$-clique-width decomposition, denoted by $\cw_{\alpha,\beta,\delta}(G)$. 
We also show (in Lemma \ref{lem:cliquewidth_and_bubblepartiton_relation}) that a bubble partition with independence number $\alpha$ can be used to find an $(\alpha, 1, 1)$-clique-width decomposition 
of similar width. Therefore, in some sense, the main result of Section \ref{sec:CliqueWidth} generalizes the main result of Section \ref{sec:BubblePartition}.

Denoting by $\BW_{\alpha}$ the class of graphs having a bubble partition with independence number $\alpha$,
we present structural results relating $\BW_{1}$ to the classes of interval, mixed unit interval, chordal, co-bipartite and split graphs.

We conclude in Section 5 with several open questions about $\fpt$ algorithms for {\sc MaxCut} and related parameters introduced in this paper, both in general and in specific graph classes. 

\section{Preliminaries}\label{sec:prelim}
\runningtitle{Graph notations and terms}
Given a simple graph (, i.e., with no loops or parallel edges) $G=(V(G),E(G))$ and a vertex $v$ of $G$,
$N_G(v)$ denotes the set of neighbors of $v$ in $G$. 
Two adjacent (resp. non-adjacent) vertices $u,v$ of $G$ are \emph{twins} (resp. \emph{false twins}) if $N_G(u) \setminus \set{v} = N_G(v) \setminus \set{u}$. 
For a graph $G$ and $U \subseteq V(G)$, we denote by $G[U]$ the subgraph of $G$ induced by $U$, and $G \setminus U \defined G[V(G) \setminus U]$. 
For a singleton $\set{x}$ and a set $Y$, $Y + x \defined Y \cup \set{x}$ and $Y - x \defined Y \setminus \set{x}$. 
A vertex set $U \subseteq V(G)$ is a \emph{clique} (resp. \emph{independent set}) (of $G$) if every pair of vertices in $U$ is adjacent (resp. non-adjacent). 
We denote by $\alpha(G)$ the maximum size of an independent set of a graph $G$.
We refer the reader to \cite{D12} for general notation and terminology regarding graphs.

\runningtitle{Some graph classes}
A graph is \emph{chordal} if it does not contain holes, i.e., induced cycles of four or more vertices.
It is known that a graph $G$ is chordal if and only if it is the vertex-intersection graph of subtrees of a tree, 
i.e., there exists a tree $T$ and subtrees $T_1, \ldots, T_n$ of it such that $v_i$ and $v_j$ are adjacent in $G$ if and only if $T_i$ and $T_j$ have a common vertex \cite{Gavril74}.
A graph $G$ is \emph{interval} if it is the intersection graph of intervals on a straight line.
The subtree intersection characterization of chordal graphs implies that interval graphs are chordal.
An interval graph is \emph{proper} (resp. \emph{unit}) if it has an interval representation such that no interval properly contains another (resp. every interval has unit length). 
It is known that the class of proper interval graphs is equivalent to the class of unit interval graphs \cite{Bogart199921}.
However, if one is allowed to use a mixture of open and closed unit intervals in the representation, richer families of graphs are obtained.
This can be easily demonstrated by the set $\set{[0,1],(1,2),[2,3], [1,2]}$ of unit intervals that represent a claw which is not a proper interval graph.
The most general such family is the family of \emph{mixed unit interval graphs} of the intersection graphs of unit intervals 
where each interval can be open, closed, or open at one end and closed at the other.

\runningtitle{Cuts}
We denote a cut of a graph $G$ by one of the subsets of the partition. 
$E(S,\bar{S})$ denotes the \emph{cut-set} of $S$, i.e. the set of the edges of $G$ with exactly one endpoint in $S$, and $\cutsize{S} \defined \abs{E(S,\bar{S})}$ is termed the \emph{cut size} of $S$.
A \emph{maximum cut} of $G$ is one having the biggest cut size among all cuts of $G$. 
We refer to this size as the \emph{maximum cut size} of $G$. 
Clearly, $S$ and $\bar{S}$ are dual; we thus can replace $S$ by $\bar{S}$ and $\bar{S}$ by $S$ everywhere. 
In particular, $E(S,\bar{S})=E(\bar{S},S)$, and $\cutsize{S}=\cutsize{\bar{S}}$.

\runningtitle{Parameterized Complexity}
A \emph{parameterized problem} is a decision problem each instance of which is a pair $(I,k)$ where $k$ is a number that is termed the \emph{parameter} of the instance. 
An algorithm that decides a parameterized problem $\Pi$ is an $\fpt$ (resp. $\xp$) algorithm if its running time is bounded by $f(k) \cdot |I|^c$ (resp. $f(k) \cdot |I|^{g(k)}$)
for some computable functions $f, g$ and some constant $c$.  
The class {\fpt} (resp. {\xp}) is the class of all parameterized problems for which an {\fpt} (resp. {\xp}) algorithm exists.
Clearly, $\fpt \subseteq \xp$.
A parameterized problem that is in {\fpt} is termed \emph{fixed-parameter tractable}.
The notation $\bigoh^*$ is used to omit polynomial factors. 
For instance, for an $\fpt$ algorithm of time complexity $\bigoh(f(k) \cdot |I|^c)$ for some constant $c$, we omit the polynomial factor of $|I|^c$ and 
say that the time complexity of the algorithm is $\bigoh^*(f(k))$.
We refer the reader to~\cite{DF13,CyganFKLMPPS15} for basic background on parameterized complexity.

\runningtitle{Bubble models}
A \emph{2-dimensional bubbles model} $\BB$ for a finite non-empty set $A$ is a 2-dimensional arrangement of bubbles $\set{B_{i,j}~|~j \in [k], i \in [r_j]}$ for some positive integers $k, r_1, \ldots ,r_k$, such that $\BB$ is a \emph{near-partition} of $A$.
That is, $A = \cup \BB$ and the sets $B_{i,j}$ are pairwise disjoint, allowing for the possibility of $B_{i,j}=\emptyset$ for arbitrarily many pairs $i,j$. 
For an element $a \in A$ we denote by $i(a)$ and $j(a)$ the unique indices such that $a \in B_{i(a),j(a)}$.
Given a bubble model $\BB$, the graph $G(\BB)$ has $\cup \BB$ as its vertex set.
Two vertices $u,v$ are adjacent (in $G(\BB)$) if and only if
$j(u)=j(v)$ or, $j(u)=j(v)+1$ and $i(u) < i(v)$.
We say that $\BB$ is a \emph{bubble model} for $G(\BB)$. 
Observe that every \emph{bubble} $B \in \BB$ is a set of twins.
A \emph{compact representation} for a bubble model is an array of \emph{columns} each of which contains a list of non-empty bubbles given by their row numbers and their vertices.

\begin{theorem}\cite{HMP09}\label{thm:Bubbles}
\begin{enumerate}[i)]
\item A graph is proper interval if and only if it has a bubble model.
\item A compact representation of a bubble model for a proper interval graph on $n$ vertices can be computed in linear time.
\end{enumerate}
\end{theorem}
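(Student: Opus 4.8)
The plan is to establish the equivalence in (i) by converting back and forth between (unit/proper) interval representations and bubble models, and to deduce (ii) from any linear-time proper interval recognition algorithm that outputs a representation in sorted form. Throughout I use that proper interval graphs coincide with unit interval graphs, so I may work with representations in which every interval has the same length, and hence none properly contains another.

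For the direction ``a graph with a bubble model is proper interval'', I would turn a bubble model $\BB$ into an interval representation directly. Assign to each $a\in\cup\BB$ the closed interval of length $1$ with left endpoint $\ell_a \defined j(a)(1+\varepsilon^2)+\varepsilon\, i(a)$, where $\varepsilon>0$ is chosen small relative to $1/\max_j r_j$. All intervals have equal length, so none properly contains another. It then remains to check that two intervals intersect exactly when the corresponding vertices are adjacent in $G(\BB)$, which splits into three routine cases according to $\abs{j(a)-j(b)}$: equal columns always intersect (the left endpoints lie within an $\varepsilon(\max_j r_j-1)<1$ window), columns at distance $\ge 2$ never intersect, and for consecutive columns the gap $\ell_w-\ell_v=(1+\varepsilon^2)+\varepsilon(i(w)-i(v))$ is at most $1$ precisely when $i(w)<i(v)$, matching the bubble adjacency rule. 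Equivalently, one can verify that ordering the vertices by $\ell_a$ is an indifference ordering and invoke the Roberts / Looges--Olariu characterization of proper interval graphs.

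For the converse I would start from a unit interval representation of $G$ and may assume, after a generic perturbation, that no two left endpoints differ by an integer; in particular all fractional parts are distinct. Write $\ell_v=c_v+\phi_v$ with $c_v\in\NN$ and $\phi_v\in[0,1)$. I set the column of $v$ to $j(v)\defined c_v$ (reindexed to $[k]$) and, crucially, set its row $i(v)$ to the \emph{global} rank of $\phi_v$ among all occurring fractional parts. With this choice, vertices in a common column always intersect and form a clique, vertices in columns at distance $\ge 2$ never intersect, and for $v$ in column $j$ and $w$ in column $j+1$ one computes that their intervals meet iff $\phi_w<\phi_v$ iff $i(w)<i(v)$, so $G\cong G(\BB)$. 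The subtle point here, and what I expect to be the main obstacle, is the row assignment: taking ranks \emph{within each column} fails, because the bubble adjacency rule compares the rows of vertices living in \emph{different} columns, so the rows must be read off a single global coordinate (the fractional part). This is exactly what forces the staircase between neighbouring columns to be consistent everywhere at once.

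Finally, for (ii) I would invoke a linear-time recognition algorithm for proper interval graphs that returns the vertices together with a canonical (straight-enumeration) ordering and the associated endpoint sequence already in sorted order. From such output the columns and the global fractional-part ranks can be read off by a single linear scan, and vertices that are twins (hence assigned the same bubble) are grouped together, producing the array of columns with their non-empty bubbles and vertices required by a compact representation. The only thing to be careful about is avoiding an explicit $\bigoh(n\log n)$ sort; this is handled by exploiting that the recognition routine already delivers its output in the needed order, so the conversion stays linear.
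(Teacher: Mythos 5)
First, a caveat about the comparison itself: Theorem~\ref{thm:Bubbles} is stated in the paper as an imported result with a citation to \cite{HMP09}; the paper contains \emph{no} internal proof of it. So your proposal can only be judged on its own terms, not against an argument in the text.

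On those terms, your part (i) is essentially correct. The direction from a bubble model to a unit interval representation works exactly as you describe: with $0<\varepsilon<1/\max_j r_j$, same-column intervals always meet, columns at distance at least two never meet, and for consecutive columns the gap $(1+\varepsilon^2)+\varepsilon\,(i(w)-i(v))$ is at most $1$ if and only if $i(w)<i(v)$, which is precisely the adjacency rule of $G(\BB)$; the $\varepsilon^2$ term correctly breaks the equal-row tie towards non-adjacency. Two steps of the converse need repair, however. First, the ``generic perturbation'' is not innocent: a unit interval representation may witness an edge by two intervals at distance \emph{exactly} $1$, and an arbitrary small perturbation can destroy that edge. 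You need an intermediate argument producing a representation with no tight pairs, e.g.\ rescale all left endpoints by a factor $1-\delta$: since there are only finitely many constraints, for small enough $\delta$ every non-adjacent pair stays at distance greater than $1$ while every adjacency becomes slack, and only then is a tie-breaking perturbation of the fractional parts harmless. Second, ``reindexed to $[k]$'' must be a shift, not a compression: if some integer value strictly between the smallest and largest occurring integer part is unused (which happens exactly when $G$ is disconnected), collapsing that gap makes two columns that were at distance two become consecutive, creating edges absent from $G$. Keep the empty columns (legal, since a bubble model is a near-partition in which empty bubbles are allowed) or treat connected components separately.

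Part (ii), by contrast, is delegated rather than proved: assuming a linear-time recognition algorithm that outputs a straight enumeration together with an endpoint sequence ``already in sorted order'' is assuming essentially the nontrivial content of \cite{HMP09}. Moreover, the bridge from that combinatorial output back to your fractional-part construction is not spelled out --- the representation returned by a recognition algorithm need not satisfy the genericity properties your column/row assignment relies on, so a combinatorial re-ranking step would still have to be described and shown to run in linear time. Since the theorem is itself a citation in the paper, this deferral is defensible, but be aware that your proposal establishes (i) independently while for (ii) it replaces one black box by another.
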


Recently an extended bubble model for mixed unit interval graphs is introduced (\cite{Kratochvil-UBubbleModel-2020}). 
Such a bubble model can be obtained by partitioning every bubble $B_{i,j}$ of the bubble model of a corresponding unit interval graph into four \emph{quadrants} $B^1_{i,j},B^2_{i,j},B^3_{i,j},B^4_{i,j}$ 
where every quadrant contains intervals of one type (open at both ends, closed at both ends, and so on).
There is an edge between vertices $u,v$ of when $j(u)=j(v)+1$ and $i(u) = i(v)$ if they belong to appropriate quadrants.

We denote by $p(G)$ the maximum number of non-empty bubbles in a column of the bubble model (resp. extended bubble model) $\BB$ of a proper (resp. mixed unit) interval graph $G$. 

In this work, we use the term bubble as a maximal set of twins and extend the scope of this definition to general graphs, not restricted to proper interval graphs. 
Whenever an ambiguity arises, we use the adjective 2-dimensional for the bubble model of a proper interval graph.

\runningtitle{Clique-width}
\emph{Clique-width} of a graph $G$ is the minimum number of labels to construct $G$ by using the following four operations defined on vertex-labeled graphs:
\begin{enumerate}
    \item The operation
    $\ell(v)$ returns a graph with one vertex $v$ labeled $\ell$.
    \item The disjoint union $G \cup G'$ of two vertex-disjoint labeled graphs $G$ and $G'$ is the graph $(V(G) \cup V(G'), E(G) \cup E(G'))$ and every vertex in $V(G) \cup V(G')$ preserves its original label.
    \item The graph $\eta_{i,j}(G)$ is obtained from the graph $G$ by connecting all the vertices labeled $i$ with all the vertices labeled $j$.
    \item The graph $\rho_{i \rightarrow j}(G)$ is obtained from the graph $G$ by replacing all labels $i$ with $j$.
\end{enumerate}

A \emph{clique-width decomposition} of a graph is a rooted binary tree that represents an expression involving the above four operations.
The graph $G_t$ corresponding to a node $t$ of $T$ is the value of the expression represented by the subtree of $T$ rooted at $t$.
Let $\LL(T)$ be the set of labels used by $T$.
The \emph{width} $w(T)$ of a decomposition $T$ is the number $\abs{\LL(T)}$ of labels it uses.
The \emph{clique-width} $\cw(G)$ of an (unlabeled) graph $G$ is the smallest width of an expression whose value is $G$ (with some labeling function).
For a label $\ell \in \LL$, we denote by $V_\ell$ the set of vertices labeled $\ell$ and by $V_{t,\ell}$ the set of vertices of $G_t$ labeled $\ell$.
We denote by $\VV_t$ the partition $\set{V_{t,\ell} \mid \ell \in \LL(T)}$ of $V(G_t)$.

\runningtitle{Decomposition by clique separators}
The concept of decomposition by clique separators is introduced by Tarjan \cite{Tarjan-CliqueSeparators}.
If $G$ is a connected graph and $K$ a clique of $G$ such that $G \setminus K$ is disconnected with connected components $V_1, V_2, \ldots, V_k$ then we decompose
$G$ into $k$ subgraphs $G[K \cup G_1], G[K \cup G_2], \ldots, G[K \cup G_k]$. 
By continuing recursively for every subgraph until a subgraph does not contain a clique separator, we obtain a decomposition of $G$.
This decomposition can be modeled by a tree $T$ where an internal node of $T$ represents a clique of $G$ 
and the leaves of $T$ represent subgraphs of $G$ termed \emph{atoms} that do not contain clique separators.
Given any graph, such a decomposition can be found in polynomial-time.

\section{Bubble Partitions}\label{sec:BubblePartition}
Following the definition of 2-dimensional bubble representations of proper interval graphs,
we term {\em bubble} a maximal set of twins.
Given a graph $G$, we denote by $G^-$ the graph obtained by contracting every bubble of $G$ to a single vertex.

A \emph{bubble partition} of a graph $G$ is a partition $\VV = \set{V_1, \ldots V_k}$ of $V(G)$
such that every $V_i \in \VV$ is a union of bubbles and the graph obtained from $G$ by contracting every set $V_i$ to a single vertex is a tree $T(\VV)$.
Note that a bubble partition $\VV=\set{V_1, \ldots, V_k}$ of $G$ corresponds to a partition $\VV^-=\set{V^-_1, \ldots, V^-_k}$ of $V(G^-)$.

A bubble partition always exists, since $\set{V}$ is a partition whose contraction results in a (trivial) tree.
The \emph{independence number} $\alpha(\VV)$ of a bubble partition $\VV$ is $\max \set{ \alpha(V_i) \mid V_i \in \VV}$,
and the \emph{width} $w(\VV)$ of $\VV$ is $\max \set{\abs{V_i^-} \mid V_i \in \VV}$, the largest number of bubbles in a set of $\VV$.
The \emph{$\alpha$-bubble width} $\bw_\alpha(G)$ of $G$ is the smallest width of a bubble partition $\VV$ such that $\alpha(\VV) \leq \alpha$ 
(and $\infty$ if no such partition exists).

Given a cut $S$ and a set $U$ of (false or true) twin vertices, we denote by $S^+(U,\ell)$ the cut obtained by adding $\ell$ vertices of $U \sm S$ to $S$ 
where $\ell \in [\abs{U \sm S}]$. 
Similarly, we denote by $S^-(U,t)$ the cut obtained by removing $t$ vertices of $U$ from $S$,
where $t \in [\abs{U \cap S}]$. 

\begin{obs}\label{obs:AddRemoveFalseTwins}
Let $U$ be an independent set of pairwise (false) twin vertices of a graph $G$, and $S$ a cut of $G$. Then
\[
\cutsize{S^+(U,\ell)} = \cutsize{S} + \ell \cdot (\abs{N(U) \sm S} - \abs{N(U) \cap S}). 
\]
\end{obs}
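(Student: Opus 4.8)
The plan is to add the $\ell$ vertices to $S$ one at a time, track the change in cut size at each individual step, and then sum. First I would pin down the relevant neighborhood using the two hypotheses. Since the vertices of $U$ are pairwise false twins, they all share a common neighborhood, which is exactly $N(U)$; and since $U$ is independent, no vertex of $U$ lies in $N(U)$, i.e. $N(U) \cap U = \es$. In particular $U$ contains no edges internally, so the only edges whose cut status can change when vertices of $U$ are moved across the partition are those joining $U$ to $N(U)$.

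Next I would analyze a single move. Take $v \in U \sm S'$ for an intermediate cut $S'$ and pass from $S'$ to $S' + v$. The edges incident to $v$ belonging to the cut-set of $S'$ are precisely those from $v$ to $N(U) \cap S'$; after moving $v$ into $S'$ these leave the cut, while the edges from $v$ to $N(U) \sm S'$, previously outside the cut, now enter it. Hence
\[
\cutsize{S' + v} = \cutsize{S'} + \abs{N(U) \sm S'} - \abs{N(U) \cap S'}.
\]
Because $N(U)$ is disjoint from $U$, inserting a vertex of $U$ into $S'$ alters neither $N(U) \cap S'$ nor $N(U) \sm S'$; consequently, at every one of the $\ell$ steps the increment equals the same constant $\abs{N(U) \sm S} - \abs{N(U) \cap S}$, independent of which vertices of $U \sm S$ have already been added and of the order in which they are added.

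Finally I would conclude by a trivial induction on $\ell$ that iterating this move $\ell$ times multiplies the constant increment by $\ell$, yielding the stated formula. The only point requiring genuine care, and the single place where the hypotheses are actually used, is the invariance of $N(U) \cap S'$ (equivalently $N(U) \sm S'$) across the successive moves; this is exactly what independence of $U$ (which forces $N(U) \cap U = \es$) together with the false-twin property (which forces a common neighborhood $N(U)$) guarantee. Everything else is bookkeeping, so I do not anticipate any real obstacle.
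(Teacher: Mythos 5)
Your proof is correct. The paper states this as an Observation with no proof at all, and your vertex-by-vertex argument --- using that pairwise false twins forming an independent set share the common neighborhood $N(U)$, which is disjoint from $U$ and therefore unchanged as vertices of $U$ are moved into the cut --- is exactly the routine verification the paper leaves implicit.
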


For a set $U$ of pairwise false twins, the marginal contribution of $U$ to $S$ is defined as $\delta(U,S) \defined \abs{N(U) \sm S} - \abs{N(U) \cap S}$. 
Then, $\cutsize{S^+(U,\ell)} = \cutsize{S} + \ell \cdot \delta(U,S)$ whenever $U$ is a set of pairwise false twins. 
Furthermore, we have
\[
\delta(U,\bar{S}) = \abs{N(U) \setminus \bar{S}} - \abs{N(U) \cap \bar{S}} = \abs{N(U) \cap S} - \abs{N(U) \setminus S} = - \delta(U,S),
\] 
and
\[
\cutsize{S^-(U,\ell)} = \cutsize{\bar{S}^+(U,\ell)} =  \cutsize{\bar{S}} + \ell \delta(U,\bar{S}) = \cutsize{S} + \ell \cdot \delta(U,\bar{S}) = \cutsize{S} - \ell \cdot \delta(U,S).
\]

Given two adjacent bubbles $B,B'$ and a cut $S$ of a graph $G$,
we denote by $S(B,B',\ell)$ the cut obtained from $S$ by adding to it $\ell$ vertices of $B \sm S$ and removing from it $\ell$ vertices of $B' \cap S$,
provided that $\ell \leq \min \set{\abs{B \setminus S}, \abs{B' \cap S} }$. 
Note that $B \cup B'$ is a clique.
Since the number of edges of $E[S,\bar{S}]$ in this clique is not affected by this operation, applying Observation \ref{obs:AddRemoveFalseTwins} twice, we get the following.

\begin{obs}\label{obs:TransferTwins}
\[
\cutsize{S(B,B',\ell)} - \cutsize{S} = \ell \cdot (\delta(B,S) - \delta(B',S)) = - (\cutsize{S(B',B,\ell)} - \cutsize{S}). 
\]
\end{obs}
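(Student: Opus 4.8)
The plan is to realise $S(B,B',\ell)$ as the composition of the two elementary moves already analysed: first add $\ell$ vertices of $B \sm S$ to obtain $S^+(B,\ell)$, and then delete $\ell$ vertices of $B' \cap S$ from the result, reaching $S(B,B',\ell)$. Since $B \neq B'$ these two moves act on disjoint vertex sets, so their composition is exactly $S(B,B',\ell)$. I would then split the edges of $G$ into those lying inside the clique $B \cup B'$ and those with at most one endpoint in $B \cup B'$, and track the two contributions to the cut size separately.

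For the edges inside $B \cup B'$, the key point is that the transfer leaves the number of clique vertices on each side of the cut unchanged: it moves $\ell$ vertices of $B$ into $S$ while simultaneously moving $\ell$ vertices of $B'$ out of $S$, so $\abs{(B \cup B') \cap S}$ is preserved. As $B \cup B'$ is a clique, its number of cut edges equals $\abs{(B \cup B') \cap S} \cdot \abs{(B \cup B') \cap \bar S}$, and hence this part of the cut is not affected by the operation. This is precisely the remark preceding the statement. I expect the careful bookkeeping here to be the main obstacle: one must check that the changes inside $B$, inside $B'$, and across the complete bipartite $B$--$B'$ part individually cancel, and, more subtly, one must justify that Observation \ref{obs:AddRemoveFalseTwins}, stated for a set of \emph{false} twins, may legitimately be applied to the true-twin bubbles $B$ and $B'$; the internal/external split is exactly what licenses this.

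For the remaining edges---those leaving $B \cup B'$---the vertices of $B$ (respectively $B'$) share a common outside neighbourhood and have no mutual edges outside the clique, so with respect to these edges each bubble behaves as a set of false twins. Applying Observation \ref{obs:AddRemoveFalseTwins} to the first move increases this part of the cut by $\ell \cdot \delta(B,S)$. For the second move I would invoke the dual relation $\cutsize{S^-(U,\ell)} = \cutsize{S} - \ell \cdot \delta(U,S)$ derived just before the statement; crucially, since the vertices of $B$ lie inside the clique and thus outside the outside-neighbourhood of $B'$, the marginal contribution of $B'$ is the same whether measured against $S$ or against $S^+(B,\ell)$, so this move decreases the external cut by exactly $\ell \cdot \delta(B',S)$. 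As the external edges incident to $B$ and those incident to $B'$ are disjoint and the outside sides of the cut are untouched by either move, the two effects simply add, and together with the vanishing internal contribution this yields $\cutsize{S(B,B',\ell)} - \cutsize{S} = \ell \cdot (\delta(B,S) - \delta(B',S))$.

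Finally, the second equality is immediate: exchanging the roles of $B$ and $B'$ in the identity just proved gives $\cutsize{S(B',B,\ell)} - \cutsize{S} = \ell \cdot (\delta(B',S) - \delta(B,S))$, which is the negative of the first expression, so $\cutsize{S(B,B',\ell)} - \cutsize{S} = -(\cutsize{S(B',B,\ell)} - \cutsize{S})$, as claimed.
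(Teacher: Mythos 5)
Your proof is correct and follows essentially the same route as the paper, which justifies the observation in one remark: the cut edges inside the clique $B \cup B'$ are unaffected since the operation preserves the number of clique vertices on each side, and Observation \ref{obs:AddRemoveFalseTwins} is applied twice to the external edges. Your write-up simply makes explicit the bookkeeping the paper leaves implicit, including the (valid) point that the false-twin hypothesis of Observation \ref{obs:AddRemoveFalseTwins} is licensed for the true-twin bubbles once the clique-internal edges are split off, and that the marginal contribution of $B'$ is unchanged by the first move.
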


Two sets $A$ and $B$ are \emph{crossing} (or $A$ crosses $B$ and vice versa) if their intersection is non-empty and none of them is a subset of the other.
A cut $S$ of a graph $G$ is \emph{tight} if the set of bubbles of $G$ crossed by $S$ corresponds to an independent set of $G^-$.

\begin{theorem}\label{thm:TightCut}
Every graph has a tight maximum cut.
\end{theorem}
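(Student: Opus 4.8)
The plan is to argue by potential minimization over the (finitely many) maximum cuts. For a cut $S$ call a bubble \emph{crossed} if it has vertices on both sides of $S$, and set $\Phi(S)$ to be the number of edges of $G^-$ joining two crossed bubbles; equivalently $\Phi(S)$ counts the pairs of adjacent bubbles that are simultaneously crossed. By definition $S$ is tight exactly when $\Phi(S)=0$, so it suffices to take a maximum cut $S$ minimizing $\Phi(S)$ and derive a contradiction from $\Phi(S)>0$. Throughout I use that a bubble, being a maximal set of twins, is either a clique (true twins) or an independent set (false twins), and that two \emph{adjacent} bubbles are completely joined in $G$ (adjacency between twin classes is all-or-nothing, which is what makes $G^-$ well defined).

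Assume $\Phi(S)>0$ and fix two adjacent crossed bubbles $B,B'$. First I would dispose of the case in which at least one of them, say $B'$, is an independent set of false twins. Since $B'$ is crossed, both $\abs{B' \sm S}\ge 1$ and $\abs{B' \cap S}\ge 1$, so Observation~\ref{obs:AddRemoveFalseTwins}, applied to adding and to removing a single vertex of $B'$, forces $\delta(B',S)\le 0$ and $\delta(B',S)\ge 0$, hence $\delta(B',S)=0$. Consequently moving all of $B'\sm S$ into $S$ leaves $\cutsize{S}$ unchanged and uncrosses $B'$ without altering any other bubble, so the new cut is still maximum but has strictly smaller potential (the edge $BB'$ is no longer counted), contradicting minimality.

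It remains to treat the case in which both $B$ and $B'$ are cliques of true twins; then $B\cup B'$ is a clique and Observation~\ref{obs:TransferTwins} applies. Because $S$ is maximum, neither transfer $S(B,B',\ell)$ nor $S(B',B,\ell)$ (both legal with $\ell=1$, as $B,B'$ are crossed) can increase the cut size, and by the observation their increments are negatives of one another; hence $\delta(B,S)=\delta(B',S)$ and every such transfer is cut-neutral. I would then apply $S(B,B',\ell)$ with $\ell=\min\set{\abs{B\sm S},\abs{B'\cap S}}\ge 1$: this preserves the maximum cut size, changes only the contents of $B$ and $B'$, and drives at least one of them entirely onto one side, uncrossing it. As before, uncrossing a bubble removes it from the crossed set and creates no new crossings, so $\Phi$ strictly decreases, again contradicting minimality.

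The two cases are exhaustive, so the potential-minimizing maximum cut satisfies $\Phi(S)=0$, i.e.\ its crossed bubbles form an independent set of $G^-$, which is exactly tightness. The delicate point, and the place where the true-twin bubbles genuinely resist a naive ``push every split bubble to one side'' argument, is that within a single clique bubble the internal cut edges contribute a concave (unimodal) term in $\abs{B\cap S}$, so an isolated clique bubble may strictly prefer to be split; the role of Observation~\ref{obs:TransferTwins} is precisely to let two \emph{adjacent} split clique bubbles exchange membership at zero net cost, which is what makes the uncrossing possible. Checking that $\ell$ can be chosen to empty one side and that the exchange never increases $\Phi$ are the only routine verifications.
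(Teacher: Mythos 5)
Your proof is correct and follows essentially the same route as the paper's: pick a maximum cut extremal with respect to a crossing potential, then use the transfer operation of Observation~\ref{obs:TransferTwins} to uncross one of two adjacent crossed bubbles at no loss in cut size, contradicting extremality. The only differences are cosmetic: you count edges of $G^-$ between crossed bubbles where the paper counts crossed bubbles, and your false-twin case is vacuous under the paper's definitions, since a bubble (a maximal set of pairwise \emph{adjacent} twins) is always a clique.
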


\begin{proof}
Suppose that the statement does not hold.
Let $S$ be a maximum cut of $G$ that is not tight, 
such that the number of bubbles that $S$ crosses is smallest possible.
Then $S$ crosses at least two adjacent bubbles $B,B'$ of $G$.
We recall that each bubble consists of twin vertices. 
Therefore, $B \cup B'$ is a clique of $G$.
By Observation \ref{obs:TransferTwins}, at least one of $\cutsize{S(B,B',\ell)} - \cutsize{S}$ and $(\cutsize{S(B',B, \ell)} - \cutsize{S})$ is non-negative for every feasible $\ell \geq 0$. 
In the sequel we assume that $\cutsize{S(B,B',\ell)} \geq \cutsize{S}$ with the other case being symmetric. 
Let $\ell = \min \set{ \abs{B \sm S}, \abs{B' \cap S} }$, and note that $\ell>0$.
Then $S(B,B',\ell)$ is a maximum cut that does not cross at least one of $B$ and $B'$. 
Since the intersection of $S$ with other bubbles is not affected, the number of bubbles that $S$ crosses is reduced by one,
contradicting the way $S$ is chosen.
\end{proof}

Let $U$ be a union of bubbles. 
A \emph{configuration} of $U$ for some tight cut $S$ is an encoding $\gamma(S,U)$ of $U \cap S$, defined as follows.
$\gamma(S,U)$ is a triple $(\II, \vect{s}, \bar{U}^-)$ where $\II$ is a (possibly empty) independent set of $G[U]^-$ that indicates the set of bubbles of $U$ that $S$ crosses, $\vect{s}$ is a vector of (non-negative) integers indexed by the elements of $\II$. %and $\bar{U}^-$ is a subset of $U^- \sm \II$. 
For a bubble $B \in U$ corresponding to a vertex of $\II$, the number $\vect{s}_B \in [\abs{B}-1]$ indicates the number of vertices of $B$ in $S$. 
The set $\bar{U}^- \subseteq U^- \setminus \II$ indicates the set of bubbles that are completely in $S$.
Therefore, in the sequel we denote $S \cap U$ also as $\gamma(S,U)$, interchangeably.
We also denote $\Gamma(U) = \set{\gamma(S,U) \mid S \textrm{~is tight}}$.

The first entry of $\gamma(S,U)$ can be chosen in at most $\sum_{i \leq \alpha(G[U]^-)} {\abs{U^-} \choose i} < \abs{U^-}^{\alpha(G[U]^-)+1}$ different ways.
The second entry can be chosen in at most $\abs{U}^{\alpha(G[U]^-)}$ different ways, 
and the last entry can be chosen in at most $2^{\abs{U^-}}$ ways.
Therefore, 
\[
\abs{\Gamma(U)} \leq \abs{U}^{2\alpha(G[U]^-)+1} 2^{\abs{U^-}} = \abs{U}^{2\alpha(G[U])+1} 2^{\abs{U^-}}.
\]

\begin{theorem}\label{thm:FPTBubbles}
Given a bubble partition $\VV$ of a graph $G$, a maximum cut of $G$ can be computed in time $\bigoh^* \left( \abs{V(G)}^{4\alpha(\VV)} 4^{w(\VV)}  \right)$.
\end{theorem}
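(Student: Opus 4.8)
The plan is to exploit the tree structure guaranteed by the bubble partition and to solve the problem by bottom-up dynamic programming over $T(\VV)$, using Theorem~\ref{thm:TightCut} to confine attention to a family of cuts that is small enough to enumerate locally. First I would record the structural fact that drives everything: since contracting each part $V_i$ to a single vertex yields the tree $T(\VV)$, every edge of $G$ either lies inside some part $V_i$ or joins two parts $V_i,V_j$ that are \emph{adjacent} in $T(\VV)$; no edge runs between parts that are non-adjacent in the tree. This is exactly what allows the cut-set to be decomposed additively over the nodes and edges of $T(\VV)$.

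Next I would set up the DP. Root $T(\VV)$ arbitrarily and process its nodes in post-order; the state at a part $V_i$ is a configuration $\gamma(S,V_i)\in\Gamma(V_i)$, which by definition records, for every bubble $B\subseteq V_i$, exactly how many of its vertices lie in $S$ (the independent set $\II$ of crossed bubbles, the vector $\vect{s}$, and the set $\bar{V}_i^-$ of bubbles entirely in $S$). Because each bubble is a set of twins, any two bubbles in distinct parts are either completely joined or completely non-adjacent, so the number of cut edges between a joined pair $B\subseteq V_i$, $B'\subseteq V_j$ is determined solely by how many vertices of $B$ and of $B'$ lie in $S$. Consequently a single configuration of $V_i$ determines the number of cut edges internal to $V_i$, and a pair of configurations of tree-adjacent parts determines the number of cut edges between them. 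I would then define $f(V_i,\gamma)$ as the largest number of cut edges inside the subtree of $T(\VV)$ rooted at $V_i$, over all cuts whose restriction to $V_i$ equals $\gamma$; the transition sets $f(V_i,\gamma)$ to the internal contribution of $\gamma$ plus, for each child $V_j$, the maximum over configurations $\gamma'$ of $V_j$ of $f(V_j,\gamma')$ together with the cross contribution between $\gamma$ and $\gamma'$. The final answer is $\max_\gamma f(V_r,\gamma)$ at the root $V_r$.

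For correctness I would argue both inequalities. Every assignment of configurations to the parts corresponds to a genuine cut (pick the prescribed number of vertices from each bubble, the choice being irrelevant by twin symmetry), so the DP never overestimates the maximum cut size. Conversely, by Theorem~\ref{thm:TightCut} some maximum cut $S$ is tight, whence the bubbles it crosses form an independent set of $G^-$; in particular its restriction to each part crosses only bubbles forming an independent set of $G[V_i]^-$, so $\gamma(S,V_i)\in\Gamma(V_i)$ for every $i$ and this optimum is among the cuts the DP considers.

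Finally the complexity. The number of states at a node is bounded by $\abs{\Gamma(V_i)}\leq\abs{V(G)}^{2\alpha(\VV)+1}2^{w(\VV)}$, as already established. Processing one tree edge amounts to iterating over all pairs of configurations of its two endpoints, costing $\abs{\Gamma(V_i)}\cdot\abs{\Gamma(V_j)}\leq\abs{V(G)}^{4\alpha(\VV)+2}4^{w(\VV)}$ times a polynomial for evaluating each cross contribution; summing over the fewer than $\abs{V(G)}$ edges of $T(\VV)$ and absorbing every polynomial factor into $\bigoh^*$ gives the claimed bound $\bigoh^*\!\left(\abs{V(G)}^{4\alpha(\VV)}4^{w(\VV)}\right)$. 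The step I expect to demand the most care is precisely the pair-of-configurations argument, namely verifying that two configurations on adjacent parts determine the cross cut-edges exactly via the joined-bubble/twin structure, together with checking that restricting the states to $\Gamma$ loses no maximum cut.
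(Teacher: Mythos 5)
Your proposal is correct and takes essentially the same approach as the paper's proof: a bottom-up dynamic program over the rooted tree $T(\VV)$ whose states are the configurations in $\Gamma(V_i)$, with Theorem~\ref{thm:TightCut} justifying the restriction of the state space to tight cuts, and the same configuration-pair counting giving the bound $\bigoh^* \left( \abs{V(G)}^{4\alpha(\VV)} 4^{w(\VV)} \right)$. The differences are only presentational: you spell out the two-sided correctness argument and the all-or-nothing adjacency between bubbles in distinct parts, both of which the paper leaves implicit.
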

\begin{proof}
Consider the tree $T=T(\VV)$ of the bubble partition $\VV$ with an arbitrarily chosen root $r$.
We denote by $\CC(t)$ the set of children of a node $t$ in $T$, 
and $V_t \in \VV$ is the set of $\VV$ from the contraction of which $t$ is obtained.
Let $T_t$ be the subtree of $T$ induced by $t$ and all of its descendants. 
Accordingly, $G_t$ denotes the subgraph of $G$ induced by all the vertices of $G$ represented by the nodes of $T_t$.
We process the nodes of $T$ from the bottom to the top and compute a set of best cuts of $G_t$, namely one cut for each possible configuration of $V_t$.
We terminate after the root $r$ is processed, and choose a configuration in $\Gamma(V_r)$ leading to a maximum cut of $G$.

For a node $t$ of $T$ and a configuration $\gamma \in \Gamma(V_t)$,
we denote by $OPT_t(\gamma)$ the maximum size of a (tight) cut $S_t$ of $G_t$ such that $\gamma$ encodes $S \cap V_t$, i.e.,
\[
OPT_t(\gamma) = \max \set{\cutsize{S_t} \mid \gamma(S,V_t) = \gamma}.
\]
By Theorem \ref{thm:TightCut}, the maximum cut size of $G$ is $\max_{\gamma \in \Gamma(V_r)} OPT_r(\gamma)$.
In the sequel we show how to compute the values $OPT_t(\gamma)$ from the values $OPT_{t'}(\gamma')$ of the children $t'$ of $t$.

Let $S_t$ be a tight cut of $G_t$, and for $t' \in \CC(t)$, 
let $S_{t'}$ denote the cut induced by $S_t$ on $G_{t'}$.
Denote by $E_{t,t'}(S_t, \bar{S_t})$ the set of edges between $V_t$ and $V_t'$ that are separated by $S_t$, 
i.e. $E_{t,t'}(S_t, \bar{S_t}) = E(G) \cap \left( (S_t \cap V_t) \times (\bar{S_t} \cap V_{t'}) \cup (\bar{S_t} \cap V_t) \times (S_t \cap V_{t'}) \right)$.
Since the vertices of $V_t$ are adjacent (in $G_t$) only to vertices of $\cup_{t' \in \CC(t)} V_{t'}$,
we have
\[
\cutsize{S_t} = \cutsize{S_t \cap V_t}  + \sum_{t' \in \CC(t)} \left( \abs{E_{t,t'}(S_t, \bar{S_t})} + \cutsize{S_{t'}} \right).
\]
Therefore,
\[
OPT_t(\gamma) =  \cutsize{\gamma}  + \sum_{t' \in \CC(t)} \max_{\gamma' \in \Gamma(V_{t'})} \left( \abs{E_{t,t'}(\gamma, \gamma')} + OPT_{t'}(\gamma') \right).
\]
Clearly, $E_{t,t'}(\gamma, \gamma')$ can be computed in $\bigoh(\abs{E(G)})$ time and
$OPT_t(\gamma)$ can be computed in time 
\[
\bigoh^* \left( \sum_{t' \in \CC(t)} \abs{\Gamma(V_{t'})} \right) \leq \bigoh^* \left( \sum_{t' \in \CC(t)} \abs{V_{t'}}^{2\alpha(V_{t'})+1} 2^{\abs{V^-_{t'}}}  \right) \leq \bigoh^* \left( \abs{V(G)}^{2\alpha(\VV)} 2^{w(\VV)}  \right).
\]
For every node $t$, we compute $\abs{\Gamma(V_t)} = \bigoh^* (\abs{V(G)}^{2\alpha(\VV)} 2^{w(\VV)})$ values of $OPT_t$.
Therefore, the running time of the algorithm is $\bigoh^* \left( \abs{V(G)}^{4\alpha(\VV)} 4^{w(\VV)}  \right)$.
\end{proof}

\begin{remark}
Optimal configurations of bubbles of an independent set can be computed independently of each other, once the configuration of other bubbles is given.
Using this observation, the above running time can be improved to  $\bigoh^* \left( \abs{V(G)}^{2\alpha(\VV)} 4^{w(\VV)}  \right)$.
\end{remark}

Denoting by $\BW_\alpha$ the class of graphs $G$ such that $\bw_\alpha(G) < \infty$, we formulate the following corollary of Theorem \ref{thm:FPTBubbles}.

\begin{corollary}\label{coro:fptbwalpha}
For every $\alpha > 0$, there is an $\fpt$ algorithm for $\maxcut$ for $\BW_\alpha$ when parameterized by $\bw_\alpha(G)$
provided that a bubble partition $\VV$ of width $\bw_\alpha(G)$ can be found in time $\bigoh^*(f(\bw_\alpha(G)))$ for some computable function $f$.
\end{corollary}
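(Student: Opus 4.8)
The plan is to combine the hypothesis of the corollary with Theorem~\ref{thm:FPTBubbles}. Fix $\alpha > 0$ and let $G \in \BW_\alpha$, so that $\bw_\alpha(G) < \infty$. By the definition of $\bw_\alpha$, this means a bubble partition $\VV$ with $\alpha(\VV) \le \alpha$ and $w(\VV) = \bw_\alpha(G)$ exists, and by the stated assumption one such $\VV$ (together with its tree $T(\VV)$) can be computed in time $\bigoh^*(f(\bw_\alpha(G)))$. The first step is therefore simply to run this computation and obtain $\VV$ explicitly.

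Second, I would feed $\VV$ into the algorithm of Theorem~\ref{thm:FPTBubbles}, which computes a maximum cut of $G$ in time $\bigoh^*\left(\abs{V(G)}^{4\alpha(\VV)} 4^{w(\VV)}\right)$. Substituting $\alpha(\VV) \le \alpha$ and $w(\VV) = \bw_\alpha(G)$ bounds this by $\bigoh^*\left(\abs{V(G)}^{4\alpha} 4^{\bw_\alpha(G)}\right)$, so that the two phases together run in time $\bigoh^*\left(f(\bw_\alpha(G)) + \abs{V(G)}^{4\alpha} 4^{\bw_\alpha(G)}\right)$. Correctness is immediate, since the second phase returns a maximum cut of $G$ by Theorem~\ref{thm:FPTBubbles}.

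The only point that needs care --- and the reason the statement is phrased with the quantifier ``for every $\alpha > 0$'' rather than treating $\alpha$ as part of the parameter --- is the bookkeeping that separates the parameter-dependent factor from the polynomial one. Here $\alpha$ is a fixed constant, so the exponent $4\alpha$ is constant and $\abs{V(G)}^{4\alpha}$ is a genuine polynomial in the input size, whereas the remaining factors $f(\bw_\alpha(G))$ and $4^{\bw_\alpha(G)}$ depend on the parameter $\bw_\alpha(G)$ alone. Hence the running time has the form $g(\bw_\alpha(G)) \cdot \abs{V(G)}^{c}$ with the computable function $g(k) = \bigoh^*(f(k) + 4^{k})$ and the constant $c = 4\alpha$, which is exactly the definition of an $\fpt$ algorithm. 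I do not expect a genuine obstacle: the heavy lifting already resides in Theorems~\ref{thm:TightCut} and~\ref{thm:FPTBubbles}, and the corollary amounts to observing that fixing $\alpha$ collapses the $\xp$-type dependence $\abs{V(G)}^{4\alpha(\VV)}$ into a polynomial factor, leaving $\bw_\alpha(G)$ as the sole source of super-polynomial growth.
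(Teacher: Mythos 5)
Your proposal is correct and matches the paper's (implicit) argument: the paper states this as an immediate corollary of Theorem~\ref{thm:FPTBubbles}, with exactly the reasoning you spell out --- compute $\VV$ via the assumed $\bigoh^*(f(\bw_\alpha(G)))$ routine, run the algorithm of Theorem~\ref{thm:FPTBubbles}, and note that for fixed $\alpha$ the factor $\abs{V(G)}^{4\alpha}$ is polynomial while $f(\bw_\alpha(G))$ and $4^{\bw_\alpha(G)}$ depend only on the parameter. Your explicit bookkeeping of the form $g(\bw_\alpha(G)) \cdot \abs{V(G)}^{c}$ is precisely what the paper leaves to the reader.
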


Recall that $p(G)$ denotes the maximum number of non-empty bubbles in a column of the 2-dimensional bubble model $\BB$ of a proper (or mixed unit) interval graph $G$. 

\begin{corollary}\label{coro:FPTProperInterval}
There is an $\fpt$ algorithm for $\maxcut$ in mixed unit interval graphs when parameterized by $p(G)$. 
Moreover, $\bw_1(G) \leq p(G) $ whenever $G$ is a mixed unit interval graph.
\end{corollary}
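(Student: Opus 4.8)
The plan is to prove the width bound $\bw_1(G) \leq p(G)$ first and then obtain the $\fpt$ algorithm as an immediate consequence of Theorem \ref{thm:FPTBubbles} (equivalently, of Corollary \ref{coro:fptbwalpha} with $\alpha = 1$). For the width bound I would start from a reduced bubble model $\BB$ of $G$, guaranteed by Theorem \ref{thm:Bubbles} for proper interval graphs and by its extension for mixed unit interval graphs, and take the partition $\VV = \set{V_1, \ldots, V_k}$ in which $V_j$ is the set of all vertices lying in column $j$. Three properties must be checked. First, $\alpha(\VV) \leq 1$: by the adjacency rule defining $G(\BB)$, any two vertices $u,v$ with $j(u)=j(v)$ are adjacent, so each column induces a clique and $\alpha(V_j) \leq 1$. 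Second, $w(\VV) \leq p(G)$: each $V_j$ is a union of maximal bubbles, and their number is at most the number of non-empty bubbles in column $j$, which is at most $p(G)$ by definition. Third, the contraction is a tree: since the adjacency rule relates only vertices in the same column or in consecutive columns $j,j+1$, contracting the columns yields a subgraph of the path $1 - 2 - \cdots - k$, and assuming $G$ is connected every consecutive pair of non-empty columns is joined by an edge (otherwise $G$ would split between columns $j$ and $j+1$), so $T(\VV)$ is exactly this path. Hence $\VV$ is a bubble partition with $\alpha(\VV) \leq 1$ and $w(\VV) \leq p(G)$, giving $\bw_1(G) \leq p(G)$.

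For the $\fpt$ algorithm I would invoke Theorem \ref{thm:Bubbles}(ii) and its mixed-unit-interval extension to compute the reduced bubble model, hence the column partition $\VV$, in polynomial (indeed linear) time. Applying Theorem \ref{thm:FPTBubbles} to $\VV$ with $\alpha(\VV) \leq 1$ and $w(\VV) \leq p(G)$ produces a maximum cut in time $\bigoh^* \left( \abs{V(G)}^{4} 4^{p(G)} \right)$, which is of the form $f(p(G)) \cdot \abs{V(G)}^{\bigoh(1)}$ and is therefore $\fpt$ when parameterized by $p(G)$; equivalently, this verifies the proviso of Corollary \ref{coro:fptbwalpha} for $\alpha = 1$.

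The step needing the most care is the verification that the column partition is a legitimate bubble partition, rather than the (immediate) clique and width bounds. Concretely, one must ensure that no maximal twin class straddles two columns, since otherwise a column would fail to be a union of bubbles and $w(\VV)$ would be ill-defined; this is exactly why I insist on a reduced model, in which each maximal twin class is a single bubble and hence confined to one column. One must also handle connectivity, since for a disconnected $G$ the column contraction is a forest rather than a tree; this is dealt with by restricting to connected $G$ and processing components separately, which is harmless for $\maxcut$, for $p(G)$, and for the $\fpt$ running time. For the mixed unit interval case I would additionally check that passing to the extended, quadrant-refined model preserves the property that each column induces a clique and that edges run only within a column or between consecutive columns, so that the three verifications go through verbatim.
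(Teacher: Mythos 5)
Your proof is correct and follows essentially the same route as the paper: take the column partition $\VV=\set{V_j}$ of the (extended) bubble model, observe that each column is a clique and a union of bubbles whose contraction is a path, and apply Theorem \ref{thm:FPTBubbles} to obtain a running time of $\bigoh^*\left(\abs{V(G)}^4 4^{p(G)}\right)$. The additional care you take (insisting on a reduced model so that no maximal twin class straddles two columns, and handling disconnected graphs componentwise so the contraction is a tree rather than a forest) addresses details that the paper's proof passes over silently, but it does not change the approach.
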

\begin{proof}
Let $G$ be a mixed unit interval graph.
The 2-dimensional bubble-representation $\BB$ of $G$ can be computed in polynomial time. 
Let $V_j$ be a column of $\BB$, i.e., $V_j=\bigcup_{i=1}^{r_j} \bigcup_{q=1}^4 B^q_{i,j}$,
and consider the partition $\VV=\set{V_j \mid j \in [k]}$.
Every set $V_j \in \VV$ is a clique and also a union of bubbles.
Moreover, the graph obtained from the contraction of every $V_j$ to a single vertex is a path. 
Therefore, $\VV$ is a bubble partition with $\alpha(\VV)=1$ and $w(\VV)=p(G)$.
By Theorem \ref{thm:FPTBubbles}, there is an algorithm for $\maxcut$ that runs in time $\bigoh^* \left( \abs{V(G)}^4 4^{w(\VV)}  \right)= \bigoh^* \left( 4^{p(G)}  \right)$.
\end{proof}

We conclude this section by relating $\BW_1$ to some known graph classes.
By the proof of Corollary \ref{coro:FPTProperInterval}, $\BW_1$ contains the class of mixed unit interval graphs.
It is easy to see that $\BW_1$ contains also the classes of split graphs and co-bipartite graphs.
Therefore, we have the following.
\begin{obs}
$
\textsf{Split} \cup \textsf{Co-Bipartite} \cup \textsf{MixedUnitInterval} \subseteq \BW_1.
$
\end{obs}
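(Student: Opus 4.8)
The plan is to exhibit, for a graph $G$ in each of the three classes, a concrete bubble partition $\VV$ with $\alpha(\VV)\le 1$ and finite width; by the definition of $\bw_1$ this gives $\bw_1(G)<\infty$, i.e. $G\in\BW_1$. Recall that $\alpha(\VV)\le 1$ means exactly that every part $V_i$ induces a clique. For mixed unit interval graphs nothing new is required: the partition into the columns of the (extended) $2$-dimensional bubble model, already used in the proof of Corollary \ref{coro:FPTProperInterval}, consists of cliques, is a union of bubbles, and contracts to a path, so $\bw_1(G)\le p(G)<\infty$. Hence it remains to treat split and co-bipartite graphs.

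For a split graph I would first fix a partition $V(G)=K\cup I$ in which $K$ is a \emph{maximal} clique and $I$ is independent, and take $\VV=\set{K}\cup\set{\set{v}\mid v\in I}$. Each part is trivially a clique, so $\alpha(\VV)\le 1$, and the width is at most $\abs{K}$. Two things must be checked. First, each part should be a union of whole bubbles: since $I$ is independent it contains no true twins, and maximality of $K$ forces every vertex of $I$ to have a non-neighbour in $K$, so no vertex of $I$ is a true twin of a vertex of $K$; thus every $v\in I$ is a singleton bubble and $K$ is the union of the remaining bubbles. Second, the contraction must be a tree: $K$ contracts to a single vertex adjacent precisely to those $v\in I$ having a neighbour in $K$, so the contraction is a star.

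For a co-bipartite graph I would write $V(G)=K_1\cup K_2$ with both $K_i$ cliques and start from $\VV=\set{K_1,K_2}$, whose contraction is a single edge (a tree). Each part is a clique, giving $\alpha(\VV)\le 1$ and width at most $\abs{V(G)}$. The only subtlety is that a true-twin bubble $B$ may \emph{straddle} the two sides. I would resolve this by reassigning each straddling bubble entirely to $V_1$. A short computation justifies this: if $b_1\in B\cap K_1$ and $b_2\in B\cap K_2$ are true twins, then for any $x\in K_1\setminus\set{b_1}$ we have $x\in N(b_1)\setminus\set{b_2}=N(b_2)\setminus\set{b_1}$, so $b_2$ is adjacent to all of $K_1$; being in $K_2$ it is adjacent to all of $K_2$ as well. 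Hence every vertex of a straddling bubble is adjacent to all other vertices, $V_1=K_1\cup\bigcup_{B\text{ straddling}}(B\cap K_2)$ stays a clique, $V_2\subseteq K_2$ stays a clique, and both parts are now unions of whole bubbles while the contraction remains (at most) a single edge.

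The step I expect to be least immediate is the straddling-bubble bookkeeping for co-bipartite graphs, because it is exactly where the two defining requirements of a bubble partition, \emph{union of bubbles} and \emph{each part a clique}, pull against each other: a part may not split a true-twin class, so one must confirm that the natural two-clique partition can be repaired by absorbing straddling bubbles without losing the clique property. A secondary point is that the definition asks the contraction to be a tree, hence connected; for a disconnected graph in any of these classes the contraction is instead a forest, so I would either restrict attention to connected graphs (the parameter and \maxcut\ both decompose over connected components) or read ``tree'' as ``forest'' throughout. With these two points handled, all three inclusions follow, establishing $\textsf{Split}\cup\textsf{Co-Bipartite}\cup\textsf{MixedUnitInterval}\subseteq\BW_1$.
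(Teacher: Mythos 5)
Your proposal is correct and follows essentially the same route as the paper: the paper treats mixed unit interval graphs exactly as you do, via the column partition from the proof of Corollary \ref{coro:FPTProperInterval}, and dismisses the split and co-bipartite cases as ``easy to see,'' for which your clique-plus-singleton-star and two-clique constructions are precisely the intended arguments. Your additional care about straddling twin-classes in the co-bipartite case and about disconnected graphs (where the contraction is a forest rather than a tree) fills in genuine details that the paper glosses over.
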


Clearly, $G \in \BW_1$ if and only if $G$ has a bubble partition where each set is a clique. 
At first glance, such a bubble partition seems to be a special case of decomposition by clique separators. 
A result of Dirac \cite{Dirac-RijidCircuit} implies that a graph is chordal if and only if it has a decomposition by clique separators the atoms of which are cliques.
Given these facts it is natural to investigate the relationship between the class $\BW_1$ and the class of chordal graphs.

\begin{theorem}\label{thm:bw1_crosses_chorldal_graphs}
$\BW_1$ crosses both classes of chordal and interval graphs.
\end{theorem}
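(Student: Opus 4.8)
The plan is to unwind the definition of crossing and verify, for each of the classes of chordal and interval graphs, the three required conditions: that its intersection with $\BW_1$ is non-empty, that it is not contained in $\BW_1$, and that it does not contain $\BW_1$. Since every interval graph is chordal, a single non-chordal member of $\BW_1$ simultaneously witnesses $\BW_1 \not\subseteq$ chordal and $\BW_1 \not\subseteq$ interval, while a single interval (hence chordal) graph lying outside $\BW_1$ simultaneously witnesses chordal $\not\subseteq \BW_1$ and interval $\not\subseteq \BW_1$. So it suffices to produce three graphs: one that is both interval and in $\BW_1$, one that is in $\BW_1$ but not chordal, and one that is interval but not in $\BW_1$.

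For the non-empty intersections I would take any proper interval graph, e.g.\ $K_2$; it is a mixed unit interval graph and therefore lies in $\BW_1$ by Corollary \ref{coro:FPTProperInterval}, and it is trivially interval, hence chordal. For a non-chordal member of $\BW_1$ I would use the four-cycle $C_4$. It is co-bipartite, so it belongs to $\BW_1$ by the observation that $\textsf{Co-Bipartite} \subseteq \BW_1$; concretely, partitioning $C_4$ into its two non-adjacent edges gives a bubble partition $\VV$ with $\alpha(\VV)=1$ whose contraction is a single edge, hence a tree. On the other hand $C_4$ is itself an induced $4$-cycle, so it is neither chordal nor interval.

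The delicate direction, and where I expect the main obstacle, is exhibiting an interval graph outside $\BW_1$. The tempting candidate is the claw $K_{1,3}$ (the standard interval, non-proper-interval graph), but it does \emph{not} work: $K_{1,3}$ is a mixed unit interval graph and so already lies in $\BW_1$ --- its three leaves are pairwise \emph{false} twins rather than a single bubble, so $\{c,l_1\},\{l_2\},\{l_3\}$ is a valid bubble partition with $\alpha(\VV)=1$ and a star (tree) contraction. In fact the connected interval graphs one writes down tend to lie in $\BW_1$, so I would instead locate the obstruction in connectivity and take $G = 2K_2$, two disjoint edges, which is interval (a forest, hence chordal). The key verification is that $G$ admits no bubble partition with $\alpha(\VV)=1$: such a partition consists of cliques, each contained in a single connected component, so with two components it has at least two parts and no edges between parts of different components, forcing a disconnected --- hence non-tree --- contraction; the only partition with connected contraction is $\{V(G)\}$, whose single part is not a clique and thus has $\alpha = 2$. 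Hence $\bw_1(G)=\infty$, i.e.\ $G \notin \BW_1$. The crux is therefore twofold: recognizing that the obvious connected candidates must be discarded because they are mixed unit interval graphs, and carefully checking that disconnectedness genuinely obstructs the existence of an $\alpha(\VV)=1$ bubble partition.
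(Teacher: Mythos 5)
Your proof is correct under the paper's definitions, and it coincides with the paper's on two of the three ingredients (cliques, respectively $C_4$, witness the non-empty intersections and $\BW_1 \not\subseteq \textsf{Chordal}$; your dismissal of the claw as a candidate is also exactly right, since the claw is mixed unit interval and hence in $\BW_1$). But on the decisive third ingredient --- an interval graph outside $\BW_1$ --- you take a genuinely different route. The paper exhibits a \emph{connected} witness: the graph obtained by adding a universal vertex $v_0$ to a path $P$ on $7$ vertices. Its argument runs: the graph is twin-free, so every part of a width-minimal partition into cliques is a set of vertices; the part $V_0$ containing the universal vertex forces $T(\VV)$ to be a star centered at $V_0$; since $V_0$ is a clique it contains at most two (consecutive) vertices of $P$, so some component of $P \setminus V_0$ has at least three vertices, which cannot lie in a single clique part and hence produces an edge between two leaves of the star, a contradiction. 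Your witness $2K_2$ instead exploits connectivity: parts with independence number $1$ are cliques, cliques cannot cross connected components, so any such partition yields a disconnected contraction, which is not a tree, while the only partition with connected contraction, $\set{V(G)}$, is not a clique. Your argument is shorter and perfectly valid, but it proves strictly less: it shows only that $\BW_1$ contains no disconnected graph, which is an artifact of requiring $T(\VV)$ to be a tree rather than a forest. Under the natural relaxation allowing a forest (equivalently, treating components separately, which is harmless for the algorithm of Theorem \ref{thm:FPTBubbles} since maximum cuts add up over components), $2K_2$ enters the class and your proof collapses, whereas the paper's connected witness survives and shows the crossing is a genuine structural phenomenon. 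This is also why the paper's example is the one that supports the open question raised in the conclusion about characterizing the chordal and interval graphs inside $\BW_1$.
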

\begin{proof}
Since every clique is both chordal, interval and $\BW_1$, the intersection of these classes is non-empty. 
Moreover, a $C_4$ (being co-bipartite) is in $\BW_1$ but not chordal.
It remains to show that that there is an interval graph which is not $\BW_1$.

Consider the graph $G$ on 8 vertices obtained by adding a universal vertex $v_0$ to a path $P$ on 7 vertices $v_1, \ldots, v_7$ where the vertices are numbered according to their order on $P$.
It is trivial to construct an interval representation for $G$. 
We claim that $G \notin \BW_1$. 
Assume for a contradiction that $G \in \BW_1$, 
and let $\VV=\set{V_0, V_1, \ldots, V_k}$ be a bubble partition of $G$ such that $\alpha(V_i)=1$ for $i=0,\ldots,k$, i.e., every $V_i$ is a clique. 
Note that each bubble consists in a single vertex as $G$ is twin-free. 
Assume without loss of generality that $v_0 \in V_0$. 
Then, the node 0 of $T(\VV)$ corresponding to $V_0$ is adjacent to every other node. 
In other words, $T(\VV)$ is a star with center 0 and leaves $1,\ldots,k$. 
Since every $V_i$ is a clique, $V_0$ contains at most two vertices of $P$ (in addition to $v_0$). 
Then $P \setminus V_0$ has at least 5 vertices and at most two connected components,
Implying that $P \setminus V_0$ has a connected component with at least three vertices.
This yields two adjacent nodes in $T(\VV)$, contradicting that $T(\VV)$ is a star with center 0.
\end{proof}

\section{Clique-width Decompositions}\label{sec:CliqueWidth}
The $\maxcut$ problem can be solved in polynomial time for graphs with bounded clique-width.  
However, an $\fpt$ algorithm for $\maxcut$ when parameterized by the clique-width of the input graph is impossible under $\ETH$ \cite{fomin2010algorithmic}.
In this section, we consider clique-width decompositions with special properties, and the behaviour of the $\maxcut$ algorithm under such decompositions.
We show that these decompositions in some sense extend bubble partitions by giving a construction of a clique-width decomposition whose width is a constant factor away of the width of a given bubble partition.

We start with properties of clique-width decompositions that we will assume without loss of generality. 
Let $r$ be the root of a clique-width decomposition $T$ of $G$ (i.e., $G_r=G$),  $t$ be a node of $T$ with parent $t'$.

\begin{itemize}
    \item If $t'$ is a union node then $G_t$ is an induced subgraph of $G$. 
    Indeed, if this is not the case, there are pairs of sets $V_{t,\ell}, V_{t,\ell'}$ such that the vertices of $V_{t,\ell}$ and $V_{t,\ell'}$ are adjacent in $G$ but not adjacent in $G_t$.
    Then, we can insert an $\eta_{\ell,\ell'}$ node between $t$ and $t'$ for every such pair $\ell,\ell'$. 
    This modification does not affect the width of the decomposition.
    \item If two vertices $u,v$ are twins in $G$ and $u \in V(G_t)$, then $v \in V(G_t)$. 
    Moreover, $u$ and $v$ have the same label $\ell$ in $G_t$.
    If this is not the case, we can remove from $T$ the node $\ell'(v)$ (and every parent node with one child), 
    and replace the expression $\ell(u)$ by the expression $\eta_{\ell,\ell}(\ell(u) \cup \ell(v))$. 
    Therefore,
    \item $V(G_t), V_{t,\ell}$ are non-crossing sets, and $\VV_t$ is a non-crossing partition of $V(G_t)$ for every $t$ and $\ell$.
    \item $u,v \in V(G_t)$ are twins in $G$ if and only if they are twins in $G_t$ and they have the same label in $G_t$.
    \item If a cut $S$ is tight then the cut $S_t$ that $S$ induces on $G_t$ is tight.
\end{itemize}

A clique-width decomposition $T$ is \emph{an $(\alpha,\beta,\delta)$-clique-width decomposition} if for every node $t$ of $T$,
there exists a set $L_t \subseteq \LL(T)$ of at most $\delta$ labels such that
\begin{itemize}
    \item the independence number of $G[\bigcup_{\ell \notin L_t} V_{t,\ell}]$ is at most $\alpha$, and
    \item the number of bubbles $\abs{V_{t,\ell}^-}$ of $V_{t,\ell}$ is at most $\beta$ whenever $\ell \notin L_t$.
\end{itemize}

We now analyze the running time of the algorithm in \cite{fomin2010algorithmic} that solves $\maxcut$ when provided with an $(\alpha,\beta,\delta)$-clique-width decomposition $T$ of the input graph $G$.
The algorithm presented in \cite{fomin2010algorithmic}  is based on the following observation.
For every graph $G_t$ and every label $\ell$ the vertices of $V_{t,\ell}$ are identical with respect to vertices not in $G_t$.
Therefore, when processing $T$ in a bottom-up fashion and $t$ is the current node, two cuts $S$  and $S'$ such that $\abs{S \cap V_{t,\ell}}=\abs{S' \cap V_{t,\ell}}$ are identical with respect to vertices of $G \setminus G_t$.
For every node $t$ of $T$ and every vector $\vect{s} \in {\mathbb N}^{w(T)}$ such that $0 \leq s_\ell \leq \abs{V_{t,\ell}}$ for every $\ell \in \LL(T)$ the algorithm computes the maximum cut size among all cuts $S$ such that $\abs{S \cap V_{t,\ell}}=s_\ell$.
The running time of the algorithm is dominated by the computation at union nodes in which, in order to compute the result for a vector $\vect{s'}$ of a parent node $t'$, 
the algorithm considers all the vectors  of $\vect{s}$ of one of the children and
for each such vector, the vector $\vect{s'}-\vect{s}$ for the other child.
Since the number of vectors $\vect{s}$ is bounded by $n^{w(T)}$ it follows that the running time of the algorithm is $n^{\bigoh(w(T))}$.

We now improve this upper bound using the above observations. 
Let $S$ be a tight maximum cut of $G$.
Then the cut $S_t$ that $S$ induces on $G_t$ is also tight, for every node $t$ of $T$.
Therefore, it suffices to consider only cuts that are tight in $G_t$.
More precisely, it is sufficient to compute the results only for vectors that result from a tight cut.

To guess a tight cut $S_t$ we first guess an independent set $\II$ of $G[\bigcup_{i \notin L_t} V_{t,i}]$ in one of the at most $(\beta \cdot (w(T)-\delta))^{\alpha+1}$ ways.
Then, for every bubble $B$ that intersects $\II$, we guess the number of vertices in $S_t \cap B$.
This can be done in at most $\abs{V(G_t)}^\alpha$ different ways.
For every bubble that is a) not labeled with a label from $L_t$, 
and b) does not intersect $\II$ 
we guess whether or not it is contained in $S_t$.
This can be done in at most $2^{\beta \cdot (w(T)-\delta)}$ ways.
Finally, we guess the number of vertices of $S_t \cap V_{t,\ell}$ for every label $\ell \in L_t$.
This can be done in at most $\abs{V(G_t)}^\delta$ ways.
We conclude that the number of vectors $\vect{s}$ to consider is at most
\[
(\beta \cdot (w(T)-\delta))^{\alpha+1} \abs{V(G_t)}^\alpha 2^{\beta \cdot (w(T)-\delta)} \abs{V(G_t)}^\delta=\bigoh(\abs{V(G_t)}^{\alpha+\delta} (\beta \cdot w(T))^{\bigoh(\alpha)} 2^{\beta \cdot w(T)}).
\]

Let $\cw_{\alpha,\beta,\delta}(G)$ be the smallest width of an $(\alpha,\beta,\delta)$-clique-width decomposition of 
$G$ (and $\infty$ if no such decomposition exists).
We conclude that for every $\alpha,\beta,\delta > 0$ the running time of the algorithm is $\bigoh^*((\beta \cdot \cw_{\alpha,\beta,\delta}(G))^{\bigoh(\alpha)} 2^{\beta \cdot \cw_{\alpha,\beta,\delta}(G)})$.
\begin{theorem}\label{thm:fpt_cliquewidth}
The $\maxcut$ problem when parameterized by $\cw_{\alpha,\beta,\delta}(G)$ is in $\fpt$ for every $\alpha,\beta,\delta>0$.
\end{theorem}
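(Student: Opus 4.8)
The plan is to read off the conclusion directly from the running-time analysis carried out immediately above the statement, which does essentially all the work. That analysis shows that, when the modified version of the algorithm of \cite{fomin2010algorithmic} is supplied with an $(\alpha,\beta,\delta)$-clique-width decomposition $T$ of $G$, it computes a maximum cut in time $\bigoh^*\left((\beta \cdot w(T))^{\bigoh(\alpha)} 2^{\beta \cdot w(T)}\right)$. I would instantiate this with a decomposition $T$ attaining the minimum width, so that $w(T) = \cw_{\alpha,\beta,\delta}(G)$; writing $k \defined \cw_{\alpha,\beta,\delta}(G)$ for the parameter, the running time becomes $\bigoh^*\left((\beta k)^{\bigoh(\alpha)} 2^{\beta k}\right)$.

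The core of the argument is then to observe that, for \emph{fixed} $\alpha, \beta, \delta$, this bound has exactly the shape required of an $\fpt$ algorithm. Indeed, $(\beta k)^{\bigoh(\alpha)}$ is a polynomial in $k$ of degree $\bigoh(\alpha)$ and $2^{\beta k}$ is a single-exponential function of $k$, so their product is a computable function $f(k)$ of the parameter alone, with the constants $\alpha$ and $\beta$ absorbed into $f$. The factors suppressed by $\bigoh^*$ — in particular the $\abs{V(G_t)}^{\alpha+\delta}$ term appearing in the bound on the number of vectors considered at each node — are polynomial in $n \defined \abs{V(G)}$ once $\alpha, \delta$ are fixed. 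Hence the total running time is $f(k) \cdot n^{\bigoh(1)}$, which is precisely the $\fpt$ form, and the theorem follows for every fixed $\alpha, \beta, \delta > 0$. I expect this bookkeeping to be entirely routine.

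The delicate point — and the only place where genuine care is needed — is the availability of the decomposition $T$. The running-time analysis presupposes that an $(\alpha,\beta,\delta)$-clique-width decomposition of width $\cw_{\alpha,\beta,\delta}(G)$ is handed to the algorithm, whereas computing clique-width decompositions exactly is $\nph$ in general. I would therefore treat the theorem in the same spirit as Corollary \ref{coro:fptbwalpha}, namely as an $\fpt$ result that is conditioned on, or supplied with, such a decomposition; alternatively one may parameterize by the width of whatever decomposition is produced by an $\fpt$ preprocessing step. For the graph classes that motivate this work, the obstacle is avoided, since Lemma \ref{lem:cliquewidth_and_bubblepartiton_relation} constructs an $(\alpha,1,1)$-clique-width decomposition directly from a bubble partition, which is computable in polynomial time for mixed unit interval graphs. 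I expect the reading-off of the formula to be the bulk of the statement and the decomposition-availability issue to be the substantive caveat that must be stated explicitly.
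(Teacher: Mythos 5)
Your proposal is correct and matches the paper exactly: the paper gives no separate proof block for this theorem, treating the running-time analysis immediately preceding it (the bound $\bigoh^*\bigl((\beta \cdot \cw_{\alpha,\beta,\delta}(G))^{\bigoh(\alpha)} 2^{\beta \cdot \cw_{\alpha,\beta,\delta}(G)}\bigr)$ on the modified algorithm of \cite{fomin2010algorithmic}) as the proof, which is precisely the reading-off you perform. Your caveat about the availability of the decomposition is also consistent with the paper's own framing, since the analysis there is explicitly stated for the algorithm ``when provided with'' an $(\alpha,\beta,\delta)$-clique-width decomposition.
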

We now present the following lemma which implies that Corollary \ref{coro:FPTProperInterval} in fact extends Theorem \ref{thm:FPTBubbles}.https://www.overleaf.com/project/5c1537ee4b64b64084fba2f7

\begin{lemma}\label{lem:cliquewidth_and_bubblepartiton_relation}
\[
\cw_{\alpha,1,1}(G) \leq 2 \bw_\alpha(G) + 1.
\]
Moreover, given a bubble partition $\VV$ of $G$, one can find an $(\alpha(\VV),1,1))$-clique-width decomposition of $G$ of width $2 w(\VV) + 1$ in polynomial time.
\end{lemma}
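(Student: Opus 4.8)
The plan is to prove the constructive ``moreover'' statement; the inequality $\cw_{\alpha,1,1}(G) \le 2\bw_\alpha(G)+1$ then follows immediately. I would take a bubble partition $\VV$ realizing $\bw_\alpha(G)$, so that $\alpha(\VV)\le\alpha$ and $w(\VV)=\bw_\alpha(G)$, and observe that an $(\alpha(\VV),1,1)$-decomposition is a fortiori an $(\alpha,1,1)$-decomposition, since the independence-number condition only relaxes when its bound is raised from $\alpha(\VV)$ to $\alpha$; its width $2w(\VV)+1=2\bw_\alpha(G)+1$ then upper-bounds $\cw_{\alpha,1,1}(G)$. So the work is to build the decomposition. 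I would root $T(\VV)$ at an arbitrary node $r$ and construct a clique-width expression for $G$ by processing $T(\VV)$ bottom-up, mirroring the recursion already used in the proof of Theorem \ref{thm:FPTBubbles}.

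The labelling scheme is where the width $2w(\VV)+1$ enters. I would reserve two classes of labels, $A=\set{a_1,\dots,a_{w(\VV)}}$ and $B=\set{b_1,\dots,b_{w(\VV)}}$, together with one \emph{dead} label $d$. The invariant is that, after the subtree $T_t$ has been processed, $G_t$ is labelled so that the at most $w(\VV)$ bubbles of $V_t$ carry pairwise distinct labels of $A$ (one bubble per label) while every other vertex of $G_t$ carries $d$. To process a node $t$ with children $c_1,\dots,c_m$ I would first create the bubbles of $V_t$, each on its own $A$-label, installing the internal edges of $V_t$ with $\eta_{a_i,a_i}$ (for true-twin bubbles) and $\eta_{a_i,a_j}$ (between adjacent bubbles of $V_t$); then, for each child in turn, relabel its $A$-labels to the staging class $B$ via $\rho_{a_i\to b_i}$, take the disjoint union with the partial graph, install the $V_t$--$V_{c_j}$ edges with the appropriate $\eta_{a_i,b_{i'}}$ operations, and finally bury the child by $\rho_{b_{i'}\to d}$. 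Each child is thus reduced to $d$ before the next is attached, so after all children are processed the invariant is restored. At the root this yields $G$ itself; correctness of the edge set follows because in $G$ the only edges incident to $V_t$ reach $V_t$ itself or the neighbouring parts $V_{c_j}$ and $V_{\mathrm{parent}(t)}$, exactly as recorded by $T(\VV)$.

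To certify the $(\alpha(\VV),1,1)$ property I would take $L_t=\set{d}$ at every node. The conditions $\abs{L_t}\le 1$ (so $\delta=1$) and ``every non-$L_t$ label carries at most one bubble'' (so $\beta=1$) are then immediate from the invariant, noting that $d$ is forced into $L_t$ because it is the only label permitted to be fat. The independence-number condition is clean at every \emph{single-part} node — leaves, the nodes creating the internal edges of $V_t$, and the nodes reached just after a child has been buried — since there the non-dead vertices lie inside one part $V_i$, whose independence number is at most $\alpha(\VV)$ by definition.

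The main obstacle is precisely the transient nodes at which a child is being attached: immediately after the disjoint union and during the $\eta_{a_i,b_{i'}}$ operations the non-dead vertices are $V_t\cup V_{c_j}$, the union of two \emph{adjacent} parts, and with the forced choice $L_t=\set{d}$ the naive estimate on the independence number is $\alpha(V_t)+\alpha(V_{c_j})\le 2\alpha(\VV)$ rather than the required $\alpha(\VV)$. This is the crux, and the reason a second label class is available. To push the bound down to $\alpha(\VV)$ I would search for a scheduling of the attach-and-bury steps, using the staging class $B$, under which the thin-labelled vertices present at any one clique-width node induce a subgraph of independence number at most $\alpha(\VV)$ — for instance by never keeping two full parts thin simultaneously and by exploiting the $V_t$--$V_{c_j}$ edges already installed to absorb one side's independent set. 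Establishing that such a schedule exists and verifying the independence bound along it is the step I expect to be hardest; the invariant, the correctness of the edge set, and the polynomial running time (linear in $\abs{V(T(\VV))}$ times the per-node cost) are routine.
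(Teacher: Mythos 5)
Your construction is, step for step, the one in the paper: the paper likewise uses two label classes $\LL=\set{\ell_1,\ldots,\ell_w}$ and $\LL'=\set{\ell'_1,\ldots,\ell'_w}$ plus a single ``dead'' label $0$, builds an expression for each part bottom-up over $T(\VV)$ with the invariant that the bubbles of $V_t$ occupy distinct labels of $\LL$ while all other vertices are labeled $0$, and attaches each child by relabeling its $\LL$-labels to $\LL'$, taking a disjoint union, inserting the edges between the two parts with $\eta$-operations, and then relabeling $\LL'$ to $0$. Your reduction of the inequality to the constructive ``moreover'' statement is also fine.

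The step you flag as unresolved, however, is a genuine gap --- and it is equally a gap in the paper's own proof, which is silent about it. At the disjoint-union node that merges a child, and at the $\eta$ and $\rho$ nodes above it until the relabeling to $0$ is complete, the label $0$ is in general the only label allowed to hold more than one bubble, so $0$ is forced to be the unique member of $L_t$, and the non-$L_t$ vertices then comprise $V_t \cup V_{t'}$ for two adjacent parts; the independence number of $G[V_t \cup V_{t'}]$ is bounded only by $\alpha(V_t)+\alpha(V_{t'}) \le 2\alpha(\VV)$, and this is tight. Concretely, for $G=P_4$ with vertices in path order $a_2,a_1,b_1,b_2$ and parts $V_t=\set{a_1,a_2}$, $V_{t'}=\set{b_1,b_2}$ (both cliques, so $\alpha(\VV)=1$, and every bubble is a singleton), at the union node every choice of at most one label for $L_t$ leaves three vertices containing an independent pair such as $\set{a_2,b_2}$, so the $(1,1,1)$ condition fails there. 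The paper verifies its invariant $\alpha(G_t\setminus V_{t,0})\le\alpha$ only at the completed subexpressions $e_t$, where the non-dead set is the single part $V_t$, even though the definition of an $(\alpha,\beta,\delta)$-clique-width decomposition quantifies over every node of the expression tree; so there is no argument in the paper against which your missing ``scheduling'' step can be compared --- the paper simply skips it. As written, both your argument and the paper's establish only that the construction is a $(2\alpha(\VV),1,1)$-decomposition of width $2w(\VV)+1$, i.e., $\cw_{2\alpha,1,1}(G)\le 2\bw_\alpha(G)+1$. This weaker statement still supports the intended application, since Theorem \ref{thm:fpt_cliquewidth} holds for every fixed triple of parameters; and the lemma itself may well remain true (for the $P_4$ example a different, sequential decomposition of width $3$ is $(1,1,1)$), but proving it with the constant $\alpha(\VV)$ requires exactly the cleverer attachment schedule you propose to search for, which neither you nor the paper supplies.
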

\begin{proof}
Let $T=T(\VV)$ rooted at an arbitrary node $r$, $\alpha=\alpha(\VV)$, $w=w(\VV)$.
For every node $t$ of $T$, let $G_t$ be the subgraph of $G$ induced by the set of vertices that are in the subtree of $T$ rooted at $t$.
Let $\LL=\set{\ell_1, \ldots, \ell_w}$, $\LL'=\set{\ell'_1, \ldots, \ell'_w}$ two label sets.
For every node $t$ of $T$, we will construct an expression whose value is $G_t$ and uses labels from the collection $\LL \cup \LL' \cup \set{0}$. 
In particular, we will obtain an expression for $G=G_r$.
Our construction will guarantee that $\abs{V_{t,\ell}^-}=1$ for every $\ell \in \LL \cup \LL'$ and
$\alpha(G_t \setminus V_{t,0}) \leq \alpha$.
Thus, the expression for $G_r$ will be an $(\alpha,1,1)$ clique-width decomposition of $G$ of width $\abs{\LL \cup \LL' \cup \set{0}}=2w+1$.

Let $B(k,\ell)$ be the expression $\eta_{\ell,\ell}\left( \bigcup_{i=1}^k \ell(i) \right)$ whose value is a $k$-clique where every vertex is labelled by $\ell$.
For a set $X$ of pairs of labels, we denote by $\eta_X$ a path consisting of nodes $\eta_{\ell,\ell'}$, one for every pair $(\ell,\ell') \in X$.
If $X \subseteq A \times B$ for two disjoint sets $A,B$ we denote by $\rho_X$ a path of $\rho_{i \rightarrow j}$ nodes, one for every pair $(i,j) \in X$.
Note that $\eta_{i,j}$ operations are commutative and the operations $\rho_{i \rightarrow j}$, $\rho_{i' \rightarrow j'}$ are commutative if $\set{i,i'} \cap \set{j,j'} = \emptyset$, i.e. the above definitions are non-ambiguous.
In particular, we denote by $\rho_{\ell \rightarrow \ell'}$ the operation of relabeling all nodes labeled $\ell_i$ with $\ell'_i$ and
by $\rho_{\ell' \rightarrow 0}$ the operation of relabeling all nodes labeled $\ell'_i$ with $0$ for every $i \in [w]$.

We now describe our construction having the extra property that the graph $G_t$ is labeled with labels from $\LL \cup \set{0}$ and every vertex not in $V_t$ is labeled 0, for every node $t$ of $T$.
For a leaf $t$ of $T$ the graph $G_t$ is the subgraph of $G$ induced by the vertices of $V_t$.
We have $\alpha(V_t)\leq \alpha$ and $\abs{V_t^-} \leq w$, i.e. $V_t$ is a union of at most $w$ bubbles.
Let $k_i$ be the number of vertices in bubble $i$.
Let $E_t$ be the edge set of $G[V_t^-]$.
Then the value of $\eta_{E_t} (\bigcup_{i \in [w]} B(k_i, \ell_i))$ is $G_t$  and it has all the claimed properties.

Let $t$ be a non-leaf node of $T$ with children $t_1, \ldots, t_k$ and $e_1, \ldots, e_k$ expressions for $G_{t_1}, \ldots, G_{t_k}$ 
each of which uses labels from $\LL \cup \LL' \cup \set{0}$ and their values are graphs where vertices of $V_{t_i}$ are labeled with labels from $\LL$ and the rest are labeled 0.

Informally, we start with a construction for $V_t$ as if $t$ were a leaf 
and for every child $t_{k'}$ of $T$ we
a) relabel all the labels $\LL$ of $G_{t_{k'}}$ by $\LL'$,
b) take the disjoint union with $G_{t_{k'}}$,
c) add the edges between $V_t$ and $V_{t_{k'}}$ as needed,
d) relabel all the labels $\LL'$ by 0.

More specifically, for $k' \in [0,k]$ we construct in an expression $e'_i$ whose value is the subgraph of $G_t$ induced by the vertices $V_t$ and all the vertices in the subtrees of $t_1, \ldots, t_{k'}$ where vertices of $V_t$ are labeled with labels from $\LL$ and the rest are labelled 0. 
Finally, the expression $e'_k$ is an expression for $G_t$ having all the claimed properties.
The expression $e'_0$ is  $\eta_{E_t} (\bigcup_{i \in [w]} B(k_i, \ell_i))$ where $k_i$ is the number of vertices in bubble $i$ of $V_t$.
For $k' \in [k]$, let $E_{k'} \subseteq \LL \times \LL'$ be such that $(i.j') \in E_{k'}$ if and only if there is an edge between vertices labeled $i$ in $V_t$ and vertices labeled $j$ in $V_{t_{k'}}$.
Then $e'_{k'} = \rho_{\LL' \rightarrow 0} (\eta_{E_{k'}} (e'_{i-1} \cup \rho_{\LL \rightarrow \LL'} (e_i)))$. 
\end{proof}

Combining Lemma \ref{lem:cliquewidth_and_bubblepartiton_relation} with Corollary \ref{coro:FPTProperInterval} we get the following corollary.
\begin{corollary}\label{coro:fpt2proper}
A $(1,1,1)$-clique-width decomposition of width $2p(G)+1$ can be computed in polynomial-time whenever $G$ is a mixed unit interval graph.
\end{corollary}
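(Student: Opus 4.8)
The plan is to chain together the two polynomial-time constructions that are already in hand. First I would take a mixed unit interval graph $G$ and invoke the construction used in the proof of Corollary \ref{coro:FPTProperInterval}. By Theorem \ref{thm:Bubbles}, together with its extension to mixed unit interval graphs, the extended $2$-dimensional bubble model $\BB$ of $G$ is computable in polynomial (in fact linear) time. Its column partition $\VV = \set{V_j \mid j \in [k]}$, where $V_j = \bigcup_{i=1}^{r_j} \bigcup_{q=1}^{4} B^q_{i,j}$, is a bubble partition of $G$: each column is a clique and the graph obtained by contracting the columns is a path. Hence $\alpha(\VV) = 1$, and by the definition of $p(G)$ as the maximum number of non-empty bubbles in a column we get $w(\VV) = p(G)$.

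Next I would feed this particular bubble partition $\VV$ into the constructive (``Moreover'') part of Lemma \ref{lem:cliquewidth_and_bubblepartiton_relation}, which transforms any bubble partition $\VV$ of $G$ into an $(\alpha(\VV), 1, 1)$-clique-width decomposition of width $2 w(\VV) + 1$ in polynomial time. Substituting the parameters computed above, namely $\alpha(\VV) = 1$ and $w(\VV) = p(G)$, yields a $(1,1,1)$-clique-width decomposition of $G$ of width $2 p(G) + 1$, exactly as claimed. The total running time is polynomial since it is the composition of the polynomial-time bubble model computation with the polynomial-time transformation of Lemma \ref{lem:cliquewidth_and_bubblepartiton_relation}.

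I do not expect a genuine obstacle, as the statement is a direct composition of two results already established. The only point that warrants a moment of care is the matching of parameters: one must verify that the column partition genuinely realises $\alpha(\VV) = 1$ and $w(\VV) = p(G)$, so that the first index of the decomposition produced by the lemma specializes to $1$ and its width specializes to $2 p(G) + 1$. Since this is precisely what the proof of Corollary \ref{coro:FPTProperInterval} establishes about $\VV$, the argument closes immediately.
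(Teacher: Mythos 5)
Your proposal is correct and matches the paper's own argument exactly: the paper derives this corollary by combining the constructive part of Lemma \ref{lem:cliquewidth_and_bubblepartiton_relation} with the bubble partition (columns of the extended bubble model, with $\alpha(\VV)=1$ and $w(\VV)=p(G)$) constructed in the proof of Corollary \ref{coro:FPTProperInterval}. Your attention to verifying the parameter values $\alpha(\VV)=1$ and $w(\VV)=p(G)$ is precisely the only checking needed, so nothing is missing.
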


\section{Conclusion and Future Work}\label{sec:conclusion}
In this work, we introduced bubble partitions of graphs and new parameters for clique-width decompositions.
We have shown that the existing $\xp$ algorithm for $\maxcut$ parameterized by clique-width runs in  $\fpt$ time (in the width of the clique-width decomposition) for every fixed values of these parameters.
We have shown that bubble partitions with bounded width and independence number can be used to a find clique-width decomposition with bounded values
of these parameters.
For mixed unit interval graphs such bubble partitions can be found in polynomial time.

Our work can be extended in the following directions. 
The dynamic programming algorithm for bubble partitions can be extended to cases where the partition induces a graph with small tree-width instead of a tree. 
One can extend the "look-ahead" for bubbles, to structures that can be decomposed into a small number of modules.
One can study the time complexity of constructing a bubble partition $\VV$ having particular $\alpha(\VV), \bw(\VV)$ parameters.
Such a study may also be confined to specific graph classes.  
It would be also interesting to use such partitions when dealing with problems other than $\maxcut$. 

A bubble partition with independence number $1$ for mixed unit interval graphs easily follows from earlier work.
The complexity of computing $\bw_\alpha(G)$ in general, or for specific graph classes is a research problem that was out of the scope of this work.

It is known that an $\fpt$ algorithm for $\maxcut$ when parameterized by clique-width is unlikely in general \cite{fomin2010algorithmic}.
The existence of such an algorithm for proper interval graphs is an open question.

There seems to be a close relation between bubble partitions with independence number 1 and decomposition by clique separators whose atoms are cliques.
The latter is known to coincide with the class of chordal graphs.
On the other hand, we have shown that the former class neither includes nor is included in the class of chordal graphs.
The characterization of chordal graphs (and interval graphs) that admit a bubble partition with independence number 1 is an interesting research question too.

%\section{Notes}
%\input{Notes}

\vspace{0.3cm}
%\textbf{\large{References}}

\bibliographystyle{abbrv}
\bibliography{Approximation,GraphTheory,Mordo,References}

\begin{thebibliography}{10}

\bibitem{Bodlaender2004}
H.~Bodlaender, C.~de~Figueiredo, M.~Gutierrez, T.~Kloks, and R.~Niedermeier.
\newblock Simple max-cut for split-indifference graphs and graphs with few
  {P}\({}_{\mbox{4}}\)'s.
\newblock In C.~Ribeiro and S.~Martins, editors, {\em Experimental and
  Efficient Algorithms}, volume 3059 of {\em Lecture Notes in Computer
  Science}, pages 87--99. Springer Berlin Heidelberg, 2004.

\bibitem{Bodlaender00onthe}
H.~L. Bodlaender and K.~Jansen.
\newblock On the complexity of the maximum cut problem.
\newblock {\em Nordic Journal of Computing}, 7:14 -- 31, 2000.

\bibitem{bodlaenderKN99MaxCutUnitInterval}
H.~L. Bodlaender, T.~Kloks, and R.~Niedermeier.
\newblock {SIMPLE} {MAX-CUT} for unit interval graphs and graphs with few
  {P}\({}_{\mbox{4}}\)s.
\newblock {\em Electronic Notes in Discrete Mathematics}, 3:19--26, 1999.

\bibitem{Bogart199921}
K.~P. Bogart and D.~B. West.
\newblock A short proof that ‘proper = unit’.
\newblock {\em Discrete Mathematics}, 201(1–3):21 -- 23, 1999.

\bibitem{BES16-MaxCutProperInterval}
A.~Boyac{\i}, T.~Ekim, and M.~Shalom.
\newblock A polynomial-time algorithm for the maximum cardinality cut problem
  in proper interval graphs.
\newblock {\em Information Processing Letters}, 121:29--33, May 2017.

\bibitem{CyganFKLMPPS15}
M.~Cygan, F.~V. Fomin, L.~Kowalik, D.~Lokshtanov, D.~Marx, M.~Pilipczuk,
  M.~Pilipczuk, and S.~Saurabh.
\newblock {\em Parameterized Algorithms}.
\newblock Springer, 2015.

\bibitem{DK2007}
J.~Diaz and M.~Kaminski.
\newblock Max-cut and max-bisection are {NP}-hard on unit disk graphs.
\newblock {\em Theoretical Computer Science}, 377(1–3):271 -- 276, 2007.

\bibitem{D12}
R.~Diestel.
\newblock {\em Graph Theory, 4th Edition}, volume 173 of {\em Graduate texts in
  mathematics}.
\newblock Springer, 2012.

\bibitem{Dirac-RijidCircuit}
G.~A. Dirac.
\newblock On rigid circuit graphs.
\newblock {\em Abh.Math.Semin.Univ.Hambg.}, 25:71 -- 76, 1961.

\bibitem{DF13}
R.~G. Downey and M.~R. Fellows.
\newblock {\em Fundamentals of Parameterized Complexity}.
\newblock Texts in Computer Science. Springer, 2013.

\bibitem{BES15-MaxCut-JOCOSpecialIssue}
T.~Ekim, A.~Boyac{\i}, and M.~Shalom.
\newblock The maximum cardinality cut problem in co-bipartite chain graphs.
\newblock {\em Journal of Combinatorial Optimization}, 35:250--265, Jan 2018.

\bibitem{fomin2010algorithmic}
F.~V. Fomin, P.~A. Golovach, D.~Lokshtanov, and S.~Saurabh.
\newblock Algorithmic lower bounds for problems parameterized by clique-width.
\newblock In {\em Proceedings of the twenty-first annual ACM-SIAM symposium on
  Discrete Algorithms}, pages 493--502. Society for Industrial and Applied
  Mathematics, 2010.

\bibitem{Gavril74}
F.~Gavril.
\newblock The intersection graphs of subtrees in trees are exactly the chordal
  graphs.
\newblock {\em Journal of Combinatorial Theory}, 16:47 -- 56, 1974.

\bibitem{Guruswami1999217}
V.~Guruswami.
\newblock Maximum cut on line and total graphs.
\newblock {\em Discrete Applied Mathematics}, 92(2–3):217 -- 221, 1999.

\bibitem{hadlock1975finding}
F.~Hadlock.
\newblock Finding a maximum cut of a planar graph in polynomial time.
\newblock {\em SIAM Journal on Computing}, 4(3):221--225, 1975.

\bibitem{HMP09}
P.~Heggernes, D.~Meister, and C.~Papadopoulos.
\newblock A new representation of proper interval graphs with an application to
  clique-width.
\newblock {\em Electronic Notes in Discrete Mathematics}, 32:27--34, 2009.

\bibitem{Kratochvil-UBubbleModel-2020}
J.~{Kratochv{\'\i}l}, T.~{Masa{\v{r}}{\'\i}k}, and J.~{Novotn{\'a}}.
\newblock {U-Bubble Model for Mixed Unit Interval Graphs and its Applications:
  The MaxCut Problem Revisited}.
\newblock {\em arXiv e-prints}, page arXiv:2002.08311, Feb. 2020.

\bibitem{Tarjan-CliqueSeparators}
R.~E. Tarjan.
\newblock Decomposition by clique separators.
\newblock {\em Discrete Mathematics}, 55(2):221 -- 232, 1985.

\end{thebibliography}
\end{document}